\theoremstyle{definition}
\newtheorem{definition}{Definition}[section]
\newtheorem{theorem}{Theorem}[section]
\newtheorem{lemma}[theorem]{Lemma}
\newtheorem{conjecture}[theorem]{Conjecture}
\newcommand{\Z}{\mathbb{Z}}
\newcommand{\R}{\mathbb{R}}
\newcommand{\C}{\mathbb{C}}
\newcommand{\V}{\mathbb{Z}_2^n}
\newcommand{\supp}{\mathop{\mathrm{supp}}}
\newcommand{\nodes}{\mathop{\mathrm{nodes}}}
\newcommand{\bigx}{\big}
\newcommand{\bigxxx}{\bigg}
\newcommand{\bigxxxx}{\Bigg}
\title{Non-Local to Local Eigenbasis Permutations of Pauli Product Diagonal Operators}
\author{Benjamin Commeau and Kevin Player}
\date{January 2024}
\begin{document}

\maketitle

\begin{abstract}
This paper investigates the feasibility of mapping non-local, sparse, diagonal forms of quantum Hamiltonians to local forms via eigenbasis permutations. We prove that such a mapping is not always possible, definitively refuting the "Quasiparticle Locality Conjecture \ref{conject1}." This refutation is achieved by establishing a lower bound, denoted $G_m$, on the number of non-zero terms in a localized diagonal form. Remarkably, $G_m$ reaches cosmologically large values, comparable to the entropy of the observable universe for certain localities $m$. While this theoretically guarantees the conjecture's falsity, the immense scale of $G_m$ motivates us to explore the implications for practically sized systems through a probabilistic approach. We construct a set of random, non-local, sparse, diagonal forms and hypothesize their probability of finding a local representation. Our hypothesize suggests a sharp transition in this probability, linked to the Hamiltonian's sparsity relative to the Bekenstein-Hawking entropy of neutron stars to black holes transition. This observation hints at a potential connection between Hamiltonian sparsity, localizability, critical phenomena warranting further investigation into their interplay in both theoretical and astrophysical contexts.
\end{abstract}

\section{Introduction}
Given a quantum Hamiltonian $H$, and an initial quantum state $|\psi(0)\rangle$, the time evolution for time $t$ of that quantum state is
\begin{align}
|\psi(t)\rangle = e^{-itH}|\psi(0)\rangle
\ .
\end{align}
Computing quantum state time evolution is a fundamental challenge across various scientific disciplines, including chemistry, biology, and materials science. A notable example is the design of efficient catalysts in enzymatic reactions \cite{tapia2010beyond}.

Diagonalizing quantum Hamiltonians is a powerful tool for fast-forwarding quantum state time evolution, where the fast-forwarded time exceeds the time scales of the Hamiltonian's interacting constituents. If we know the diagonalizing unitary $W$ that diagonalizes $H$ by satisfying
\begin{align}
W^\dagger H W =  D 
\end{align}
where $D$ is a diagonal operator (diagonal form) that only contains the eigenvalue information of $H$, then quantum state time evolution simplifies to
\begin{align}
|\psi(t)\rangle
=
(W)(e^{-it D })(W^\dagger)|\psi(0)\rangle
\end{align}
where the time operator exponential only needs to be evaluated over a diagonal operator. Any operator sum decomposition of $D$ into simpler diagonal operators will be easier to evaluate because all diagonal operators commute. Variational Hamiltonian Diagonalization (VHD) \cite{commeau2020variational} was proposed as a quantum algorithm that could solve quantum Hamiltonian diagonalization for quantum state fast forwarding that could be implemented on noisy intermediate-scale quantum devices (NISQ). 

Similar to the Variational Quantum Eigensolver (VQE) \cite{peruzzo2014variational}, our approach requires the Hamiltonian to be expressed as a weighted sum of unitaries
\begin{align}
H = \sum_{a,b} h_{a,b} (i)^{a\cdot b} X^{a} Z^b 
\end{align}
where
\begin{align}
X^k = \bigotimes_{p=0}^{n-1} \bigxxx(
I \delta_{0,k_p} + X \delta_{1,k_p}
\bigxxx)
\label{pauli x products}
\\
Z^k = \bigotimes_{p=0}^{n-1} \bigxxx(
I \delta_{0,k_p} + Z \delta_{1,k_p}
\bigxxx)
\label{pauli z products}
\end{align}
are the Pauli string operators, $a,b,k\in \Z_2^n$, $k_p\in \Z_2$ are the bits  of $k=\sum_{p=0}^{n-1}2^nk_p$, $h_{a,b}$ are the Pauli string weights, $a\cdot b$ is the Hamming inner product, and 
\begin{align}
I = 
\begin{pmatrix}
1 & 0 \\
0 & 1
\end{pmatrix}
&&
X = 
\begin{pmatrix}
0 & 1 \\
1 & 0
\end{pmatrix}
&&
Z = 
\begin{pmatrix}
1 & 0 \\
0 & -1
\end{pmatrix}
\end{align}
are the identity and Pauli matrices. This unitary sum decomposition is used because the individual unitaries in the sum can be implemented on a quantum computer using quantum gate operations. The Pauli string operators form a complete operator basis in a $n$-qubit Hilbert space.

Throughout this paper, we will be focusing on two important definitions: 

(1) The number of non-zero Pauli string weights :
\begin{align}
\text{NNZ}_{\text{PSW}}(H)
=
\text{NNZ}(h)
=
\sum_{a,b\in \Z_2^n} \mathbbm{1}_{ h_{a,b} \neq 0 }
\end{align}
which counts the number of non-zero Pauli string weights. This is also loosely defined as the zero norm of the Pauli string weights.

(2) The Pauli string qubit locality:
\begin{align}
\text{L}_{\text{PSQ}}(H)
=
\max_{a,b\in \Z_2^n}
\bigx\{
|a\lor b|\text{ : } h_{a,b} \neq 0
\bigx\}
\end{align}
which is the largest number of qubits that are aligned to non-identity Pauli matrices of a single Pauli string with non-zero weight, $|\cdot |$ is the Hamming weight, $\lor$ is the logical bitwise OR.

For the algorithm to run in polynomial time with respect to the Hamiltonian system size (number of qubits $n$), three important requirements need to be satisfied: 

(1) $H$ and $D$ must have no more than a $\mathcal{O}(\text{poly}(n))$ number of non-zero weights in their Pauli string sum representation.

(2) $W$ must have no more than $\mathcal{O}(\text{poly}(n))$ quantum gate circuit depth.

(3) No barren plateau issues caused by the ansatz of $W$ and $D$ which can further constrain (1) and (2).

The ansatz of the diagonal form in \cite{commeau2020variational} was selected to be the Pauli string diagonal operator sum 
\begin{align}
D = \sum_{k=0}^{2^n-1} \gamma_k Z^k = W^\dagger H W
\ ,
\end{align}
where $\gamma_k$ are the diagonal form coefficients. There are two unsolved questions about VHD: 

(1) What is the set of all $H$'s having $D$'s that satisfy $\text{NNZ}_{\text{PSW}}(H)=\mathcal{O}(\text{poly}(n))$ and $\text{NNZ}_{\text{PSW}}(D)=\mathcal{O}(\text{poly}(n))$?

(2) For the set of all $H$'s having $D$'s that satisfy $\text{NNZ}_{\text{PSW}}(H)=\mathcal{O}(\text{poly}(n))$ and $\text{NNZ}_{\text{PSW}}(D)=\mathcal{O}(\text{poly}(n))$, what is its subset  satisfying $\text{L}_{\text{PSQ}}(D) = k$ where $k\geq 0$ and $k$ independent of $n$?

It turns out that answering these two questions is both analytically and numerically difficult, because the eigenvalue spectrum $\lambda$ of $H$ is invariant under permutations of its vectored ordering (i.e. $\lambda\rightarrow \pi \lambda \Leftrightarrow \pi D \pi^\dagger \rightarrow D' : \pi \in S_{2^n}$), but their associated diagonal form  coefficients $\gamma$ are not. $S_{d}$  is the set of all permutation matrices in dimension $d$. 
 $\gamma$ and $\lambda$ are treated as vectors of dimension $2^n$ satisfying
\begin{align}
\gamma(\pi)
=
\dfrac{1}{\sqrt{2^n}}
H^{\otimes n} \pi\lambda
\end{align}
where $\gamma(\pi)$ means $\gamma$ is sensitive to the permutation matrix $\pi$, 
\begin{align}
(H^{\otimes n})_{p,q}
=
\dfrac{1}{\sqrt{2^n}}
(-1)^{p\cdot q}
\end{align}
is the $n$-qubit Hadamard unitary matrix satisfying $(H^{\otimes n})^2 = \mathbbm{1}$.
For example, in $n=4$, the following three diagonal forms have the same eigenvalue spectrum:
\begin{align}
D_{\text{sparse local}} 
=
ZIII
+
IZII
+
IIZI
+
IIIZ
\label{sparse local}
\\
D_{\text{sparse non-local}} 
=
ZIII
+
ZZII
+
ZZZI
+
ZZZZ
\label{sparse non-local}
\end{align}
\begin{align}
D_{\text{dense}} 
=
ZIII
+
IZII
+
IIZI
+
IIIZ
\nonumber\\
-\dfrac{1}{4}\bigxxx(
ZIII
+
ZIIZ
+
ZIZI
+
ZIZZ
\nonumber\\
+
ZZII
+
ZZIZ
+
ZZZI
+
ZZZZ
\bigxxx)
\label{dense}
\end{align}
where $IZIZ$ is short hand notation for $I\otimes Z \otimes I \otimes Z$. $\pi$ satisfying $\pi D_{\text{sparse local}}\pi^\dagger = D_{\text{sparse non-local}}$ can be created using a few CNOT gates. However, $\pi$ satisfying $\pi D_{\text{sparse local}}\pi^\dagger = D_{\text{dense}}$ requires several NOT, CNOT, and CCNOT gates in its construction, because only the first two eigenvalues are permuted. In general, the fewer number of permuted eigenvalues, the more non-zero terms are generated in the diagonal form. This is because $\gamma(\pi)$ and $\lambda$ satisfy an entropic uncertainty inequality \cite{maassen1988generalized}
\begin{align}
h(\pi\lambda) + h(\gamma(\pi)) \geq n
\end{align}
where
\begin{align}
h(\lambda)
=
-
\sum_{k=0}^{2^n-1} 
\dfrac{\lambda_k^2}{
\sum_{r=0}^{2^n-1} \lambda_r^2
}
\log_2
\bigxxx(
\dfrac{\lambda_k^2}{
\sum_{s=0}^{2^n-1} \lambda_s^2
}
\bigxxx)
\end{align}
is the Shannon entropy of the normalized squared $\lambda$. Note: $h(\pi\lambda) = h(\lambda)$.
The entropic uncertainty inequality implies that any small changes in $\lambda$ result in big changes in $\gamma$ and vice versa. It also means that $\lambda$ and $\gamma$ cannot both be sparse. This makes finding optimally sparse representations of $D$ extremely difficult, because a brute-force search across all possible permutations of the eigenvalue spectrum scales $(2^n)!$, which for $n=4$ is infeasible for a common laptop (i.e. $(2^4)!\approx 10^{13}$), and $n=5$ is infeasible for the most powerful supercomputer (i.e. $(2^5)!\approx 10^{35}$). It is also surprising that this topic has not been explored in the literature.

We can combine the entropic uncertainty inequality with Rényi entropy inequalities to lower bound
\begin{align}
\text{NNZ}(\gamma(\pi)) \geq 2^{h(\gamma(\pi))} \geq \dfrac{2^n}{2^{h(\lambda)}}
\label{nnz inequality}
\end{align}
This is a powerful inequality for us, because if we can compute the entropy of the normalized squared eigenvalue spectrum of $H$, we can lower bound $\text{NNZ}(\gamma(\pi))$, and this lower bound is indepedent of $\pi$ . In other words, if $h(\lambda)$ scales asymptotically larger than $\mathcal{O}(\log(n))$ then $\text{NNZ}(\gamma(\pi))$ is guaranteed to scale larger than $\mathcal{O}(\text{poly}(n))$ for all $\pi$. However, we cannot use this inequality to verify that $\gamma(\pi)$ is sparse when $\lambda$ is dense.

Although Eq. \ref{nnz inequality} is independent of eigenbasis permutations; it has a potential problem when it comes to locality. Two diagonal forms $D$ and $D'$ can have the same coefficients but different sets of bitstrings, thus they are indistinguishable in Eq. \ref{nnz inequality}. In other words, $D$ and $D'$ can have the same sparisty, but one could be $k$-local and the other is non-local in their bitstring representation (e.g. $\text{L}_{\text{PSQ}}(D) = k$ and $\text{L}_{\text{PSQ}}(D') = n$). For example, Eq. \ref{nnz inequality} cannot distinguish between the diagonal forms in equations $\ref{sparse local}$ and $\ref{sparse non-local}$.

Since different localities can have the same sparisity, a natural question arises; can all sparse non-local diagonal forms be mapped to local diagonal forms using eigenbasis permutations (i.e. they have the same eigenvalue spectrum)? We will prove that this is indeed false.

\section{Summary Outline of Results}
We show that our Conjecture \ref{conject1} (Non-Local to Local Eigenbasis Permutation Map) is false.  We do this by using some simple but involved combinatorics.  The complexity statement in the Conjecture \ref{conject1} is shown to be false, but the coefficient to do so is quite large. 

We introduce the conjecture in parts.  The first Conjecture \ref{conject1} is a direct translation from the physics to math using vectors with bounded support.  An alternate, easier to work with Conjecture \ref{conject2} that uses single elements of the support is also given, and the two conjectures are proved to be the same.

In the next section we cover some ``Tools'' that will be used to create the bounds.  We cover some notation of functions, Fourier transforms, and characters.  A map $\Psi$ is given that is shown to be an injective homomorphism between the bit vectors under XOR and a certain group ring under multiplication.  Finally, a few technical Lemmas are presented with results that the images of $\Psi$ times a certain two power have integral coefficients.

The last section extends a series of increasingly larger bounds, $A_m$, $B_m$, $D_m$, $E_m$ and $G_m$.  This is done mainly by considering sets and counting (combinatorics).  The fact that $\Psi$ is a homomorphism is used near the end of the Lemma \ref{nonzeromult} by inputting a subgroup $S$ of large enough size that we see that $\Psi$ would have to have repeats.  This contradicts that $\Psi$ is an injective homomorphism.  Regarding $S$ also as the set in the conjecture, the refutation follows.

To more concisely state our result: Given the set of all $n$-qubit diagonal forms $D$ and three additional subsets
\begin{align}
R &=& 
\bigx\{
D 
\ &|& \ 
\text{NNZ}_{\text{PSW}}(D)\geq  2^{\lceil \log_2(G_m)\rceil} 
\bigx\}
\\
R' &=&
\bigx\{
D 
\ &|& \ 
\text{L}_{\text{PSQ}}(D)\leq m
\bigx\}
\\
R'' &=&
\bigx\{
\pi D \pi^\dagger 
\ &|& \ 
D \in R' \text{ and } \pi \in S_{2^n}
\bigx\}
\end{align}
where $m \leq n$, $m$ is not a function of $n$, and 
\begin{align}
G_m
=
(2^m + 1) ^ {m^{(3m)} 64^{(m^2-m)}}
\label{Gm}
\end{align}
then $R \setminus R''$ is non-empty, meaning there exists scenarios where sparse non-local diagonal forms cannot be shown to be eigenspectrum equivalent to local diagonal forms.

\section{Discussion}


\begin{figure}[!h]
\centering
\includegraphics[width=0.9999\linewidth]{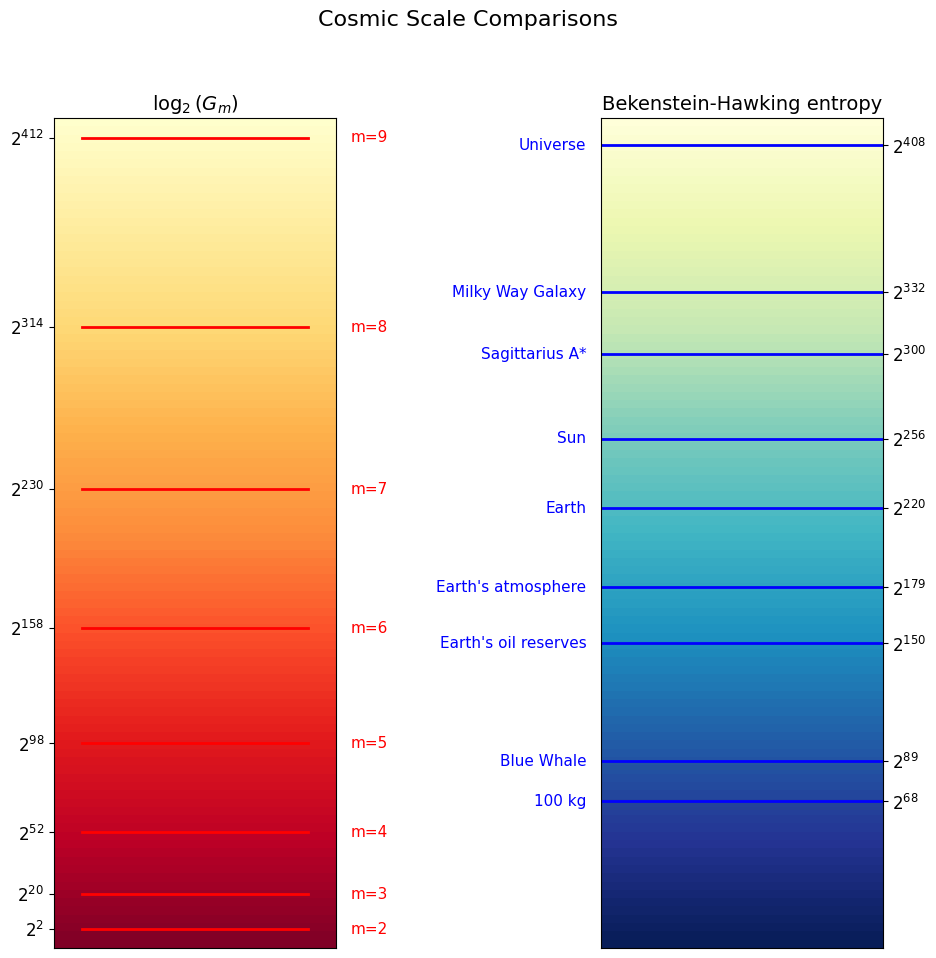}
\captionsetup{width=.9\linewidth}
\caption{
Cosmic Scales and the Limits of Localizability: This figure compares the theoretical bound $G_m$ (Eq. \ref{Gm}), governing the refutation of the Quasiparticle Locality Conjecture \ref{conject1}, to various cosmic scales represented by their Bekenstein-Hawking entropy (in bits) \cite{balasubramanian2024microscopic}. The immense magnitude of $G_m$, reaching universe-scale entropies, suggests that while demonstrably false in theory, its practical implications might only arise for systems as extreme as black holes. Notably, the neutron star to black hole transition occurs between $m = 7$ and $m = 8$, corresponding to typical entropies of stellar-mass black holes. (The bar chart displays the maximum entropy attainable by a given mass when compressed into a black hole).
}
\label{cosmic scales}
\end{figure}

This paper establishes the falsity of Conjecture \ref{conject1}. This means that a sparse, non-local diagonal form of a quantum Hamiltonian cannot always be transformed into a local diagonal form via eigenbasis permutations, using unitary transformations. The theoretical proof relies on a large constant ($G_m$) that dictates the system size $n$ needed for the refutation to hold. While Conjecture \ref{conject1} is demonstrably false for sufficiently large systems, the size of $G_m$ raises practical concerns. For realistically sized systems, encountered in computational simulations and describing physical phenomena, the system size $n$ might not reach the theoretical limit required for Conjecture \ref{conject1} refutation to have tangible implications. This suggests a crucial gap between the theoretical limitations exposed by the Conjecture \ref{conject1}'s falsity and their practical consequences for quantum systems that we can realistically study and simulate.

We propose to investigate a new research direction centered around the hypothesis that black holes may fundamentally be non-quasiparticle systems. Although black holes have been extensively studied within the framework of the Sachdev-Ye-Kitaev (SYK) model \cite{sachdev2023quantum}, the question of whether they can be definitively classified as non-quasiparticle systems remains unresolved.

If black holes are indeed non-quasiparticle in nature, this raises a critical and intriguing question: How does a neutron star, which is traditionally understood as a quasiparticle system, transition into a black hole during gravitational collapse? Specifically, we seek to understand the mechanism by which the quasiparticle description of a neutron star breaks down as it collapses into a black hole, potentially leading to a non-quasiparticle description. We aim to investigate the nature of this hypothetical transition and to uncover the underlying physical mechanisms that might govern it.

In our analysis, we employed the spin-z diagonal form representation $D$ to investigate the system's properties. To connect this representation with the fermionic picture of quasiparticles, we utilized the Jordan-Wigner transformation, a well-established method that maps spin operators to fermionic operators. Specifically, the Jordan-Wigner transformation allows us to map spin-z local diagonal operators to local diagonal fermionic operators. This mapping is achieved by representing each spin-$1/2$ in terms of fermionic creation and annihilation operators, which obey anti-commutation relations. This correspondence allows us to translate the problem from a spin system into a fermionic system, facilitating the analysis within the spin-z representation. By using this approach, we ensure that the key features of the spin system are preserved in the fermionic framework, enabling a more comprehensive understanding of the underlying physics.

Having established the theoretical limit on localizing diagonal forms through the refuted Conjecture \ref{conject1}, we now examine its practical implications by analyzing $D_R(k)$. The sharp transition in probability at $k = 2^w$ links the mathematical properties of Hamiltonian sparsity to the critical particle count relevant to neutron star to black hole transitions, hinting at a profound connection between seemingly disparate fields. As shown in Fig. 1, the scale of $G_m$ reaches values comparable to the Bekenstein-Hawking entropy \cite{balasubramanian2024microscopic} of stellar-mass black holes for $m = 7$ and $m = 8$, placing this transition between these qubit localities. This begs the question: Does the difficulty of finding a localizable representation, as quantified by $G_m$ for practical $m$, become intertwined with critical phenomena like the transformation from neutron star to black hole?

To explore this question, we propose a hypothesis. We introduce a probabilistic approach by considering a random non-local sparse diagonal form, denoted as $D_R(k)$. This diagonal form is constructed as follows:

\begin{align}
D_{R}(k) = 
\sum^{k \ : \ k < n}_{
\substack{
j=1 \ : \ \gamma_j \in \mathbbm{R}
\\
t_1, t_2, \cdots, t_k \in_R \mathbbm{Z}_{2^n}
}
}
\gamma_{j} Z^{t_j}
\end{align}
where $k$ (the sparsity parameter) controls the number of non-zero terms in the summation,  $\gamma_j$  represents a randomly chosen real coefficient for each term, and  $Z^{t_j}$  is a Pauli-Z string operator associated with a randomly selected bitstring  $t_j$. By analyzing  $D_R(k)$, we aim to determine the probability of finding a local representation for a sparse diagonal form chosen randomly from the space of all possible sparse diagonal forms, thus providing insights into the prevalence of localizability in practically relevant quantum systems.

We suggest the hypothesis that the randomly constructed diagonal form $D_R(k)$ reveals a striking result: the probability of finding a local representation through eigenbasis permutations depends critically on the relationship between the sparsity parameter $k$ and the Bekenstein Hawking entropy relevant to neutron star-black hole transitions, denoted as $w$.  

This probability can be expressed as:

\begin{equation}
\mathbb{P} 
\left( 
\bigcup_{\pi \in S_{2^n}}
\bigxxxx( 
\text{L}_{\text{PSQ}}
\bigxxx( 
\pi
D_{R}(k)
\pi^\dagger
\bigxxx)
\leq m
\bigxxxx)
\right)
\approx
\begin{cases} 
1 & \ : \ k < 2^w \\
0 & \ : \ k > 2^w
\end{cases}
\end{equation}

This equation indicates an abrupt transition:  when the sparsity parameter  $k$ is less than  $2^w$, the probability of finding a local form is close to 1 (almost certain). However, when $k$  exceeds $2^w$, this probability plummets to near zero, indicating that finding a local representation becomes highly improbable.

The sharp transition in localizability probability, governed by the relationship between the sparsity parameter $k$ and the critical particle count $w$, carries significant implications for both quantum simulation and our understanding of astrophysical phenomena. Fig. \ref{cosmic scales} suggests that this might be possible, because the critical Bekenstein-Hawking entropy occurs between $m=7$ and $m=8$.

The findings presented in this paper open up compelling avenues for future research, bridging the realms of quantum information science and astrophysics.  Several key directions warrant further investigation:

While our analysis establishes a connection between $k$ and $w$, deriving a more precise mathematical relationship between the sparsity parameter and the critical particle count for neutron star-black hole transitions is crucial. This could involve exploring analytical and numerical techniques to characterize the functional dependence of $k$ on $w$ and investigating potential scaling laws. 

By pursuing these research directions, we can deepen our understanding of the interplay between sparsity, localizability, and critical behavior in quantum systems, potentially uncovering new insights into the nature of quantum complexity and the fundamental workings of the universe.

\section{Conclusion}
Conclusion
This work unveils a fascinating interplay between the abstract mathematical properties of quantum Hamiltonians and the complex physics of extreme astrophysical phenomena. We've demonstrably disproven the Conjecture \ref{conject1}, revealing a fundamental limitation on the localizability of certain sparse diagonal forms via eigenbasis permutations. While the theoretical guarantee of this refutation rests upon the magnitude of a large constant $(G_m)$, this work explores its implications within the more practically relevant domain of finite-sized systems.
By introducing a probabilistic framework and analyzing a randomly constructed sparse diagonal form $(D_R(k))$, we've uncovered an abrupt transition in localizability probability. This transition hinges upon a critical relationship between the Hamiltonian's sparsity and a critical particle count relevant to neutron star-black hole transformations. The emergence of this connection suggests that the feasibility of efficiently simulating quantum systems, particularly those nearing critical points, may be deeply intertwined with the sparsity structure of their underlying Hamiltonian descriptions. Furthermore, this research hints at a profound interplay between the mathematical framework of quantum information and the physical processes governing extreme astrophysical systems.
Future explorations should focus on unraveling the precise nature of the relationship between Hamiltonian sparsity and critical phenomena, particularly in the context of neutron star-black hole transitions. Investigating this connection across various physical systems and developing sophisticated simulation models promise to unlock profound insights into the fundamental principles governing quantum complexity, efficient simulatability, and the dramatic transformations that shape our universe.

\section{Acknowledgements}

The research in this presentation was conducted with the U.S. Department of Homeland Security (DHS) Science and Technology Directorate (S\&T) under contract number 75N98120D00172 and task number 70RST23FR0000015. Any opinions contained herein are those of the author and do not necessarily reflect those of DHS S\&T.


\appendix

\section{Supplementary Information}

\subsection{Conjecture}
Let $n$ be a positive integer with elementary abelian two group $\V$ corresponding to the group of $n$-bit vectors under XOR.  For $v = (v_1,\cdots,v_n) \in \V$ let $|v| = \sum_{i=1}^n v_i$ define the locality of $v \in \V$.  This is also called the weight, density, or popcnt.

Let $\pi \in Sym(\V)$ be a permutation which we also regard as a $2^n$ by $2^n$ permutation matrix.  Let $H$ be the $2^n$ by $2^n$ Hadamard matrix
\begin{align}
H_{I,J} = (-1)^{I \cdot J}.
\end{align}
and let $e_i$ be a standard basis vector $(\delta_{x,i})_{x \in \V}$, a density one column.

\begin{conjecture}[Non-Local to Local Eigenbasis Permutation Map]
\label{conject1}
For $k \in \Z_{\ge 0}$ there exists an $m_k \in \Z$ and some constant $C_k$ such that if $S \subseteq \V$ has size $|S| \le C_k n^k$ then for any $x = \sum_{J \in S} c_J e_J$ there exists a permutation $\pi$ such that for any $I \in \V$ with $|I| > m_k$ we have $(H \pi H x)_{I}$ = 0.
\end{conjecture}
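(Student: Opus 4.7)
My plan is to interpret the conjecture in the Walsh--Hadamard picture. Since $(Hx)_I = \sum_{J\in S} c_J (-1)^{I\cdot J}$, the vector $H\pi H x$ is (up to the Hadamard normalization) the Fourier transform of the permuted function $I \mapsto (Hx)_{\pi^{-1}(I)}$, so requiring $(H\pi H x)_I = 0$ for all $|I|>m_k$ is equivalent to asking that the permuted function be a multilinear polynomial of degree at most $m_k$ in the $\pm 1$-valued coordinates. The goal is then: given only that $|S| \le C_k n^k$, find $\pi \in Sym(\V)$ whose action on $Hx$ lands in the $O(n^{m_k})$-dimensional subspace spanned by Walsh characters of weight at most $m_k$.

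First I would record the block structure of $Hx$. Let $d = \dim \mathrm{span}(S) \le |S|$ (dimension over $\Z_2$). Then $Hx$ depends only on $d$ linear functionals of $I$, so it is constant on the $2^{n-d}$ cosets of $\mathrm{span}(S)^\perp$ and takes at most $2^d$ distinct values. This identifies the $Sym(\V)$-orbit of $Hx$ as the set of all functions on $\V$ with the same multiset of values and the same block-size profile, and the problem reduces to showing some member of this orbit is a low-degree polynomial.

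As a warm-up I would examine the subgroup of linear permutations $\pi(I) = BI$ where $B$ is an invertible linear map on $\V$; these send the Fourier support $S$ to $B^{-T}S$, so the task reduces to mapping $S$ into the Hamming-weight-$\le m_k$ ball. This works cleanly when $|S|$ is constant, but for $k \ge 1$ one is forced to map a $d$-dimensional subspace into that ball, which is impossible once $d > m_k$. So the proof must use genuinely nonlinear permutations.

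The main obstacle, and the heart of the argument, will be to build such a nonlinear $\pi$ while keeping $m_k$ independent of $n$. My proposed approach is a counting/pigeonhole argument: the $Sym(\V)$-orbit of $Hx$ has multinomial size determined by the block profile, whereas degree-$\le m_k$ polynomials form the tiny $O(n^{m_k})$-dimensional subspace. I would parameterize degree-$\le m_k$ polynomials whose value multisets match the required block profile --- perhaps via Reed--Muller-style encodings of the block data as low-degree functions of the qubit indices --- and argue that such a polynomial always exists. The hardest step will be closing this count: the target subspace is exponentially smaller than the ambient vector space, and the values in each block are determined by $x$ in an essentially arbitrary way, so any successful construction must exploit the very specific structure of sparse Fourier data. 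Whether the combinatorics actually permits a constant $m_k$ uniformly in $n$ is precisely the question to be resolved, and it is the quantitative barrier here that any proof must overcome.
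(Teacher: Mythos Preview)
Your proposal attempts to \emph{prove} the conjecture, but the paper \emph{refutes} it: Theorem~\ref{Refutation theorem} establishes that Conjecture~\ref{conject1} is false. The fundamental gap is therefore that you are trying to establish a statement that does not hold; no pigeonhole or Reed--Muller-style encoding can succeed, because for every fixed $m$ there exist sets $S$ of bounded size (indeed, subgroups of $\V$) for which \emph{no} permutation $\pi$ achieves locality $\le m$. Your own closing sentence already senses the obstruction---the degree-$\le m$ subspace is far too small relative to the $Sym(\V)$-orbit---and the paper exploits precisely that smallness, but in the opposite direction.

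The paper's refutation runs as follows. It first passes to the equivalent subset formulation (Conjecture~\ref{conject2}) and, for a fixed $\pi$, defines $\Psi(J) = \overline{F_J} := \sum_g (H\pi H)_{g,J}\, g \in \C\V$, shown to be an injective homomorphism from $(\V,+)$ into the units of the group ring. Assuming locality $\le m$, Parseval together with a dyadic lower bound on the nonzero coefficients forces $|\supp(\overline{F_J})| \le A_m = 2^{2m-2}$ and hence $|\nodes(\overline{F_J})| \le B_m = mA_m$. A combinatorial argument (Lemma~\ref{bignodebound}) then shows that any \emph{subgroup} $S \subseteq \V$ can collectively touch at most $D_m = B_m^3$ bit positions, so $\Psi(S)$ lies in a finite set of size at most $G_m = (2^m+1)^{D_m^{\,m}}$. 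Choosing $S$ to be a subgroup of size just exceeding $G_m$ forces a collision in $\Psi$, contradicting injectivity. Thus already the $k=0$ case of the conjecture fails.
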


\begin{conjecture}[Subset Formulation]
\label{conject2}
For $k \in \Z_{\ge 0}$ there exists an $m_k \in \Z$ and some constant $C_k$ such that if $S \subseteq \V$ has size $|S| \le C_k n^k$ then there is a permutation $\pi$ such that for any $J \in S$ and any $I \in \V$ with $|I| > m_k$ we have $(H \pi H)_{I,J}$ = 0.  
\end{conjecture}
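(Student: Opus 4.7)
The plan is to \emph{refute} Conjecture \ref{conject2}, consistent with the paper's stated main result. It suffices to handle the case $k=0$: for each fixed locality bound $m$, I would produce a value of $n$ (on the order of $\log_2 G_m$) and a subset $S \subseteq \V$ with $|S| \le G_m$ such that no permutation $\pi \in \mathrm{Sym}(\V)$ simultaneously satisfies $(H\pi H)_{I,J} = 0$ for every $J \in S$ and every $I$ with $|I| > m$. Because $G_m$ depends only on $m$, this defeats any polynomial bound $C_k n^k$ once $n$ is taken large enough, so refuting the $k=0$ case cascades to all $k$.

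The central device is the injective homomorphism $\Psi : (\V, \oplus) \to (R,\times)$ constructed in the Tools section, together with the lemmas stating that $\Psi(v)$ has integral coefficients after multiplication by a fixed power of two. The strategy is to translate the vanishing condition on $(H\pi H)_{I,J}$ into an algebraic constraint on $\Psi(J)$, so that, for each fixed $\pi$, the set $T_\pi \subseteq R$ of "$m$-locally representable" images is a structured subset whose cardinality can be bounded by combinatorial counting alone. Crucially, I would choose $S$ to be a \emph{subgroup} of $\V$; then $\Psi(S) \subseteq R$ is a subgroup of the multiplicative image of $\Psi$, which is what lets the homomorphism hypothesis do real work at the end.

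Bounding $|T_\pi|$ is the combinatorial heart of the argument and proceeds through the advertised chain $A_m \le B_m \le D_m \le E_m \le G_m$. The coarsest count $A_m$ enumerates the possible support sets of $m$-local columns; the successive refinements fold in the number of distinct admissible coefficient vectors using the integrality lemmas (which cap the denominators that can appear), the number of sign/phase configurations arising from the Hadamard and character layer, and finally the combinatorial overhead introduced by the ambient permutation. The final bound
\begin{align}
G_m = (2^m + 1)^{m^{(3m)} \, 64^{(m^2-m)}}
\end{align}
is uniform in $\pi$ and independent of $n$, which is the essential feature.

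The refutation then closes by pigeonhole against injectivity of $\Psi$. Pick any subgroup $S \le \V$ with $|S| > G_m$, which is possible as soon as $n > \log_2 G_m$; the smallest such power of two gives the $S$ mentioned in Lemma \ref{nonzeromult} of the outline. If the conjecture held with parameters $m_k = m$, then there would be some $\pi$ with $\Psi(S) \subseteq T_\pi$, yielding $|S| = |\Psi(S)| \le |T_\pi| \le G_m$, contradicting $|S| > G_m$. The main obstacle I anticipate is the middle of the cascade, $B_m \to D_m \to E_m$: one has to show that the locality constraint really does collapse the admissible coefficient patterns down to a set whose size depends on $m$ alone, and this is precisely where the Fourier-on-$\V$ and character-sum bookkeeping, together with the two-power integrality lemmas, must be managed carefully to avoid a spurious dependence on $n$.
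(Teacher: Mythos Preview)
Your overall strategy matches the paper's proof exactly: refute the $k=0$ case by taking $S$ to be a subgroup of $\V$ of size just exceeding $G_m$, and use injectivity of the homomorphism $\Psi$ to force a pigeonhole contradiction against the bound $G_m$ on the number of admissible images. One correction on the roles of the intermediate bounds: in the paper $A_m$ (via Parseval plus the two-power integrality lemma) bounds the \emph{support size} of a single $\Psi(J)$, $B_m = mA_m$ bounds its \emph{node set} (the bit positions it touches), and the decisive step $D_m = B_m^3$ bounds the node set of the \emph{entire subgroup} $\Psi(S)$ via the red/blue-node combinatorial argument of Lemma~\ref{bignodebound} together with Lemma~\ref{nonzeromult}; only after the nodes are confined to an $n$-independent set does one count terms ($E_m$) and coefficient values ($G_m$). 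So the $n$-independence you correctly flag as the crux is achieved by bounding \emph{nodes}, not by directly bounding coefficient or sign patterns as your description suggests.
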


Note, we will only use the $k=0$ case, so we write $m = m_k = m_0$.

\begin{lemma}
\label{conjeq}
  Conjectures \ref{conject1} and \ref{conject2} are equivalent.
\end{lemma}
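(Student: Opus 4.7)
The plan is to prove the equivalence by showing both implications directly, exploiting the fact that the action of $\pi$ on $x$ is linear in the coefficients $c_J$. The direction Conjecture \ref{conject2} $\Rightarrow$ Conjecture \ref{conject1} should be immediate from linearity, while the converse requires a ``generic coefficients'' argument to decouple the single linear constraint into individual column constraints.

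For the easy direction, suppose $S$ is as in Conjecture \ref{conject1} and let $\pi$ be the permutation guaranteed by Conjecture \ref{conject2} with the same $m_k$ and $C_k$. For any $x = \sum_{J \in S} c_J e_J$, linearity of $H\pi H$ gives $(H\pi H x)_I = \sum_{J \in S} c_J (H\pi H)_{I,J}$, and every summand vanishes once $|I| > m_k$. So the same $\pi$ witnesses Conjecture \ref{conject1} for this $x$.

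For the converse, the strategy is to pick a particularly nice vector $x$ supported on $S$ and then read off the column-wise statement from the single-vector statement. The key observation is that the entries of $H\pi H$ are integers: since $H_{I,K} = (-1)^{I\cdot K}$ and $\pi$ is a permutation, we have
\begin{align}
(H\pi H)_{I,J} \;=\; \sum_{L \in \V} (-1)^{I \cdot \pi(L) + L \cdot J} \in \Z.
\end{align}
Given $S$ with $|S| \le C_k n^k$, choose real numbers $\{c_J\}_{J \in S}$ that are linearly independent over $\Q$ (for instance, $\sqrt{p_J}$ for distinct primes $p_J$, or any algebraically independent transcendentals). Apply Conjecture \ref{conject1} to $x = \sum_{J \in S} c_J e_J$ to obtain a permutation $\pi$ with $(H\pi H x)_I = 0$ for every $I$ with $|I| > m_k$. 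Expanding this as before yields an integer linear combination $\sum_{J \in S} c_J (H\pi H)_{I,J} = 0$, and $\Q$-linear independence of the $c_J$ forces each $(H\pi H)_{I,J}$ to vanish. This same $\pi$ therefore witnesses Conjecture \ref{conject2} for $S$ with the same constants $m_k$, $C_k$.

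The only potentially tricky step is the integrality/$\Q$-linear independence bookkeeping: one must verify that $H$ is taken with unnormalized $\pm 1$ entries (as the conjecture statements indicate, no $1/\sqrt{2^n}$ factor appears inside $H\pi H$), so that the matrix entries really lie in $\Z$ and the rational independence argument applies. Once that is in place the proof is essentially a one-line linear-algebra observation, and no use of the combinatorial structure of $\V$ or of bounds like $G_m$ is needed — the equivalence is purely formal and holds for every $k$.
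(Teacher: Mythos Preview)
Your proof is correct and follows essentially the same strategy as the paper: linearity for the easy direction, and a ``generic coefficients'' vector for the converse. The only cosmetic difference is that the paper takes $c_{J_i} = \tau^i$ for a single transcendental $\tau$ (so the integer matrix entries become coefficients of a polynomial in $\tau$, forced to vanish by transcendence), whereas you choose $\Q$-linearly independent reals directly and make the integrality of $(H\pi H)_{I,J}$ explicit; both devices encode the same observation.
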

\begin{proof}
First we assume that Conjecture \ref{conject1} is true.  Let $\tau \in \R$ be a transcendental number\footnote{Alternatively, we could use a set of vectors that are approximately linearly independent.}, and let $S = \{J_1,\cdots,J_{|S|}\}$.  Define $x = \sum_{i = 1}^{|S|} \tau^i e_{J_i}$.  Then there is a $\pi$ such that for each $I$, $|I| > m$, we have $(H \pi H x)_I = 0$, and then
\begin{align}  
\left(\sum_{i = 1}^{|S|} (H \pi H) \tau^i\right)_{I,J_i} = \left(\sum_{i = 1}^{|S|} H \pi H e_{J_i} \tau^i\right)_I 
\nonumber \\
= \left(H \pi H x\right)_I = 0.
\end{align}
For any $J \in S$ we have $J = J_i$ for some $i$.   The transcendence of $\tau$ lets us isolate a single term on the left side $(H \pi H)_{I,J}$ = 0.  Thus Conjecture \ref{conject2} is true.

Second we assume that Conjecture \ref{conject2} is true, and we let $x = \sum_{J \in S} c_J e_J$.  So $I \in \V$ with $|I| > m$.  Then we know that 
\begin{align}
(H \pi H x)_I = \sum_{J \in S} c_J (H \pi H)_{I,J} = 0
\end{align}
and we have that Conjecture \ref{conject1} is true.  
\end{proof}

\subsection{Tools}

\subsubsection{Group Ring and Characters}
Let $R$ be a ring and $G$ be an group.  Then the group ring $RG$ is defined using set of formal linear combinations of the form $\sum_g r_g g$ with coefficients $r_g$ in $R$ and group elements $g$ in $G$.  Addition in $RG$ is coefficient-wise $\sum_g r_g g + \sum_g s_g g = \sum_g (r_g + s_g) g$.  Multiplication of elements in $RG$ is given by multiplying the $G$ elements and collecting and adding up the coefficients in $R$
\begin{align}
\left(\sum_{g_1 \in G} r_{g_1} g_1  \right) \left(\sum_{g_2 \in G} s_{g_2} g_2  \right) = \sum_{g_1,g_2 \in G} r_{g_1} s_{g_2} (g_1 g_2)
\nonumber \\
=
\sum_g \left(\sum_{g_1g_2 = g} r_{g_1} s_{g_2}\right) g
\end{align}
which we recognize as a convolution.

The set of functions $\text{Func}[G,R]$ with pointwise addition and convolution form a ring that is isomorphic to the group ring $RG$.  The map between them is given by taking the coefficients as a function $f(g) = r_g$.

We will first assume that $G$ is abelian so that the characters of $G$ are simply the homomorphisms from $G$ to $\mathbb{C}^*$. We will also only consider $R$ to be $\C$ (or its subrings such as $\Z$). We will study specifically $G = \V$ where the characters are all given by $\chi_x(g) = (-1)^{x \cdot g}$ for $x \in \V$.  We extend $\chi_x$ linearly to $RG$  
\begin{align}
\chi_x\left(\sum_g r_g g\right) = \sum_g r_g \chi_x(g) = \sum_g r_g (-1)^{x \cdot g}
\end{align}
\subsubsection{Fourier Transform}

This maps to a Fourier transform of functions $\hat{f}(x) = \sum_g (-1)^{x \cdot g} f(g)$. Define a ring isomorphism
\begin{align}
\iota: Func(\V,\C) \rightarrow Matrices_{n,1}(\C)
\end{align}
that takes a function $f(x)$ to a column vector of values $(f(x))_{x \in \V}$.  Then we see how $H$ computes a Fourier transform
\begin{align}
\hat{f} = \iota^{-1}(H \iota(f))
\end{align}
We have another isomorphism mentioned before
\begin{align}
\kappa: Func(\V,\C) \rightarrow \C \V
\end{align}
that takes a function $f(g)$ to the group ring element $\sum_{g\in \V} f(g) g$. For any function norm $||.||$ we induce norms on group ring elements $||f|| = ||\kappa^{-1}(f)||$.

\begin{lemma}
\label{kappalemma}
\begin{align}
\chi_x(\kappa(f)) = \hat{f}(x)
\end{align}
\begin{proof}
Both sides are equal to $\sum_g f(g) (-1)^{g \cdot x} $.
\end{proof}
\end{lemma}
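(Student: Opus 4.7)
The plan is to verify the identity by direct computation, simply unfolding the definitions on both sides until the two expressions visibly coincide. Starting from the left, I would write $\kappa(f) = \sum_{g \in \V} f(g)\, g$ by the definition of $\kappa$, and then apply the character $\chi_x$. Because $\chi_x$ was defined on single group elements by $\chi_x(g) = (-1)^{x \cdot g}$ and then extended linearly to all of $\C\V$, this yields
\begin{align}
\chi_x(\kappa(f)) = \sum_{g \in \V} f(g)\, \chi_x(g) = \sum_{g \in \V} f(g)\, (-1)^{x \cdot g}.
\end{align}
On the right side, the Fourier transform is defined directly as $\hat{f}(x) = \sum_g (-1)^{x \cdot g} f(g)$, which is the same sum (noting that $\C$ is commutative, so the placement of $f(g)$ relative to $(-1)^{x \cdot g}$ is immaterial, and the inner product $x \cdot g$ in $\V$ is symmetric so $g \cdot x = x \cdot g$).

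The lemma is therefore essentially a tautology bridging two pieces of notation: the character-theoretic view of the group ring $\C\V$ and the function-theoretic view under the Fourier transform $H$. There is no combinatorial identity or nontrivial algebraic step involved. The only thing that requires care, which I would treat as the main (and modest) obstacle, is ensuring that the linear extension of $\chi_x$ from $\V$ to $\C\V$ is invoked correctly, since $\chi_x$ as originally introduced is a homomorphism $\V \to \C^{*}$, whereas here it is being used as a $\C$-linear functional on the group ring. Once that extension is made explicit, the two sums line up term-by-term and the claim follows immediately.
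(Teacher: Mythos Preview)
Your proposal is correct and follows exactly the paper's approach: both expand the definitions of $\kappa$, $\chi_x$, and $\hat{f}$ to arrive at the common expression $\sum_g f(g)(-1)^{g \cdot x}$. The paper's proof is the one-line version of what you wrote out in detail.
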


Let $\pi$ be fixed and let $I,J \in \V$ then
\begin{definition}[Fourier Coefficients $a_I^J$]
\begin{align}
a_I^J := \frac{1}{2^n} \widehat{F_J}(I) = (H \pi H)_{I,J}
\end{align}
\end{definition}
\begin{definition}[Partial Fourier Transform $F_J$]
\label{fdef}
Define the function $F_J$ as $F_J(x) = (-1) ^ {\pi(x) \cdot J}$.
\end{definition}

\begin{definition}[Full Fourier Transform $\overline{F_J}$]
\label{lift}
We define a useful group ring element
\begin{align}
\overline{F_J} := \sum_{g \in \V} a^J_g g \in \C \V
\end{align}
\end{definition}

We can see both input and output characters, $\chi_I$ and $\chi_J$ respectively, occurring as terms here
\begin{align}
\overline{F_J}(I) = \frac{1}{2^n} \sum_{x \in \V} (-1)^{x \cdot I + \pi(x) \cdot J}
\end{align}
so we call $\overline{F_J}$ a ``full'' transform and $F_J$ a ``partial'' transform respectively.

We also have $||\overline{F_J}||_\infty \le 1$ since $|\widehat{F_J}(g)| \le 2^n$.  
To be compatible with Conjecture \ref{conject2} we assume that $a^J_I$ is zero for all $I$ with $|I| > m$.

\begin{lemma}
\label{otherway}
\begin{align}
F_J(x) =  \sum_{g \in \V} a^J_g (-1)^{g \cdot x} = \chi_x(\overline{F_J}).
\end{align}
\end{lemma}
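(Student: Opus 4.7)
The statement to prove consists of two equalities, and my plan is to dispatch them independently. The right-hand equality $\sum_{g \in \V} a^J_g (-1)^{g \cdot x} = \chi_x(\overline{F_J})$ should fall out of definitions alone. By Definition \ref{lift}, $\overline{F_J} = \sum_g a^J_g g$, and by the linear extension of the character $\chi_x$ explained in the Group Ring and Characters subsection, we have $\chi_x(\sum_g a^J_g g) = \sum_g a^J_g \chi_x(g) = \sum_g a^J_g (-1)^{x \cdot g}$. Since $x \cdot g = g \cdot x$, this matches the middle expression exactly, so this half is immediate.

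The left-hand equality $F_J(x) = \sum_{g \in \V} a^J_g (-1)^{g \cdot x}$ is essentially the inverse Fourier transform / character-orthogonality identity. My plan is to substitute the definition $a^J_g = \tfrac{1}{2^n}\widehat{F_J}(g) = \tfrac{1}{2^n}\sum_{y \in \V}(-1)^{g \cdot y} F_J(y)$ into the sum, swap the order of summation, and apply the character orthogonality relation $\sum_{g \in \V}(-1)^{g \cdot (x+y)} = 2^n \delta_{x,y}$. The inner sum collapses, picking out $y = x$, and the $1/2^n$ normalization cancels against the $2^n$ from the delta, leaving $F_J(x)$.

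A cleaner alternative route, which I would probably prefer for exposition, uses Lemma \ref{kappalemma} more directly. Observe that $\overline{F_J} = \kappa(g \mapsto a^J_g)$, so Lemma \ref{kappalemma} gives $\chi_x(\overline{F_J})$ equals the Fourier transform of the function $g \mapsto a^J_g$ evaluated at $x$. Because $g \mapsto a^J_g$ is itself $\tfrac{1}{2^n}\widehat{F_J}$, this is $\tfrac{1}{2^n}$ times the double Fourier transform of $F_J$; the involutive property $H^2 = 2^n \cdot \mathbbm{1}$ of the (unnormalized) Hadamard matrix then returns $F_J(x)$.

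I do not anticipate a real obstacle here: the lemma is just the inverse Fourier transform rewritten in the group-ring / character language of the tools section. The only thing to be careful about is keeping the $1/2^n$ normalization consistent between $a^J_g$, $\widehat{F_J}$, and the orthogonality relation, which is a bookkeeping matter rather than a conceptual one.
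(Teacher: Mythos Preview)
Your proposal is correct and matches the paper's own proof: the paper says the first equality follows from the inverse Fourier transform and the second from the definition of $\chi_x$, which is exactly your plan (you merely spell out the orthogonality computation and the $1/2^n$ bookkeeping that the paper leaves implicit).
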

\begin{proof}
  The first equality follows from the inverse Fourier transform.  The second equality uses the definition of $\chi_x$.
\end{proof}

\begin{lemma}
  \label{allchi}
  Let $y \in \C \V$.  Then if  $\chi_x(y) = 0$ for all $x$ then $y = 0$.  Similarly, if $\chi_x(y_1) = \chi_x(y_2)$ for all $x$ then $y_1 = y_2$.
\end{lemma}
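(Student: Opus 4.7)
The plan is to recognize that the map $y \mapsto (\chi_x(y))_{x \in \V}$ is (up to normalization) precisely the Hadamard/Fourier transform from the group ring $\C\V$ to the space of functions on $\V$. Since the Hadamard matrix $H_{I,J} = (-1)^{I \cdot J}$ satisfies $H H^T = 2^n \mathbbm{1}$ and is therefore invertible, its kernel is trivial, and the first statement follows immediately. Concretely, writing $y = \sum_{g \in \V} r_g g$, one has $\chi_x(y) = \sum_g r_g (-1)^{x \cdot g}$, which is the $x$-th coordinate of $H$ applied to the coefficient column $(r_g)_{g \in \V}$. This is the same identification already used in Lemma \ref{kappalemma}.

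First I would unpack the definition of $\chi_x$ extended linearly to $\C\V$ and rewrite $\chi_x(y)$ as the inner product $\sum_g r_g (-1)^{x \cdot g}$. Second, I would invoke the orthogonality of characters of $\V = \Z_2^n$, namely
\begin{align}
\sum_{x \in \V} (-1)^{x \cdot (g + g')} = 2^n \, \delta_{g,g'},
\end{align}
which is a direct consequence of the fact that for any nonzero $h \in \V$ the terms $(-1)^{x \cdot h}$ pair off to zero as $x$ ranges over $\V$. Third, assuming $\chi_x(y) = 0$ for every $x$, I would multiply by $(-1)^{x \cdot g'}$ and sum over $x$: the orthogonality relation collapses the double sum to $2^n r_{g'}$, yielding $r_{g'} = 0$ for every $g' \in \V$, hence $y = 0$.

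The second assertion is then a one-line consequence: apply the first part to $y_1 - y_2$, using that $\chi_x$ is linear so $\chi_x(y_1 - y_2) = \chi_x(y_1) - \chi_x(y_2) = 0$ for all $x$.

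There is no real obstacle here; this is a standard Fourier-inversion / character-orthogonality argument for the finite abelian group $\V$, and the only choice to make is whether to phrase it via invertibility of the Hadamard matrix (which the paper already uses, since $(H^{\otimes n})^2 = \mathbbm{1}$ after normalization) or via the orthogonality sum above. Both routes give a proof of only a few lines.
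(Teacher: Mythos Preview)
Your proposal is correct and follows the same approach as the paper: the paper's proof is the single sentence ``This is due to the Fourier transform being a change of basis and hence an isomorphism,'' and your argument is exactly an explicit unpacking of that fact via the invertibility of the Hadamard matrix / orthogonality of characters.
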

\begin{proof}
This is due to the Fourier transform being a change of basis and hence an isomorphism.
\end{proof}

\subsubsection{Defining $\Psi$ Injection}

\begin{lemma}
\label{psichi}
For a $\pi$ the map $\Psi:\V \rightarrow \left(\C \V \right)^*$ that sends $J \mapsto \overline{F_J}$ is an injective homomorphism.  Also $(\chi_x \Psi) (J) = F_J(x)$ and $\Psi_\pi(0) = 1$.
\end{lemma}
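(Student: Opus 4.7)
The plan is to verify the four claims of the lemma in the order: (a) the character identity $(\chi_x \Psi)(J) = F_J(x)$, (b) the base case $\Psi(0) = 1$, (c) the homomorphism property, and (d) injectivity. Claim (a) is essentially a repackaging of Lemma \ref{otherway}: by Definition \ref{lift} we have $\overline{F_J} = \sum_g a_g^J g$, so $\chi_x(\overline{F_J}) = \sum_g a_g^J (-1)^{g\cdot x} = F_J(x)$, which is exactly what Lemma \ref{otherway} asserts. Claim (b) follows by computing $F_0(x) = (-1)^{\pi(x)\cdot 0} = 1$, so $\widehat{F_0}(I) = \sum_x (-1)^{x\cdot I} = 2^n \delta_{I,0}$, hence $a_I^0 = \delta_{I,0}$ and $\overline{F_0}$ is the identity of the group ring.

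For claim (c), I would use that each character $\chi_x$ is a ring homomorphism from $\C\V$ to $\C$ (it sends a group element $g$ to $(-1)^{x\cdot g}$, which is multiplicative in $g$, and extends linearly). Given $J_1, J_2 \in \V$, I would compute
\begin{align}
\chi_x\bigl(\overline{F_{J_1}}\cdot\overline{F_{J_2}}\bigr) &= \chi_x(\overline{F_{J_1}})\,\chi_x(\overline{F_{J_2}}) = F_{J_1}(x)F_{J_2}(x)\notag\\
&= (-1)^{\pi(x)\cdot J_1}(-1)^{\pi(x)\cdot J_2}\notag\\
&= (-1)^{\pi(x)\cdot (J_1\oplus J_2)} = F_{J_1\oplus J_2}(x)\notag\\
&= \chi_x\bigl(\overline{F_{J_1\oplus J_2}}\bigr),
\end{align}
where the last equality uses claim (a). Since this holds for every $x \in \V$, Lemma \ref{allchi} forces $\overline{F_{J_1\oplus J_2}} = \overline{F_{J_1}}\cdot \overline{F_{J_2}}$, i.e.\ $\Psi(J_1\oplus J_2) = \Psi(J_1)\Psi(J_2)$. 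Combined with (b) this gives a group homomorphism from $(\V,\oplus)$ into the unit group $(\C\V)^*$.

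For claim (d), it suffices to show the kernel is trivial. Suppose $\Psi(J) = 1$, so $\overline{F_J} = \overline{F_0}$. Applying $\chi_x$ to both sides and using (a) yields $F_J(x) = F_0(x) = 1$, i.e.\ $(-1)^{\pi(x)\cdot J} = 1$ for every $x \in \V$. Because $\pi$ is a permutation of $\V$, the values $\pi(x)$ range over all of $\V$, so $y\cdot J = 0$ for all $y \in \V$, which by nondegeneracy of the bilinear form on $\V$ forces $J = 0$.

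I do not expect any genuine obstacle here; the lemma is largely a bookkeeping statement that says the pointwise product of $\pm 1$-valued functions $F_J$ on $\V$ corresponds, under inverse Fourier transform into the group ring, to convolution. The only place to be a bit careful is invoking that each $\chi_x$ is multiplicative on $\C\V$ (so it preserves the convolution product) and that $\pi$ being a bijection of $\V$ is what promotes ``$\pi(x)\cdot J = 0$ for all $x$'' into ``$y\cdot J = 0$ for all $y$,'' which is the step that uses the hypothesis that $\pi$ is actually a permutation rather than an arbitrary map.
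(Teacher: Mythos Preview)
Your proof is correct and follows essentially the same approach as the paper: both verify the character identity via Lemma \ref{otherway}, prove the homomorphism property by computing $\chi_x$ on both sides and invoking Lemma \ref{allchi}, and establish injectivity by reducing to $\pi(x)\cdot J$ being constant for all $x$ and using that $\pi$ is a bijection. The only cosmetic differences are that you argue injectivity via triviality of the kernel (having already established the homomorphism property) whereas the paper checks $\Psi(J_1)=\Psi(J_2)\Rightarrow J_1=J_2$ directly, and you compute $\overline{F_0}=1$ by explicit Fourier inversion rather than via Lemma \ref{allchi}.
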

\begin{proof}
Using Lemma \ref{otherway} and Definition \ref{fdef} we have for each $x$
\begin{align}
\begin{split}
  \chi_x(\Psi(J_1 + J_2)) &= \chi_x(\overline{F_{J_1 + J_2}}) \\
&= F_{J_1 + J_2}(x) \\
&= (-1)^{\pi(x) \cdot (J_1 + J_2)} \\
&= (-1)^{\pi(x) \cdot {J_1}} (-1)^{\pi(x) \cdot {J_2}} \\
&= F_{J_1}(x) F_{J_2}(x) \\
&= \chi_x(\overline{F_{J_1}}) \chi_x(\overline{F_{J_2}}) \\
&= \chi_x(\Psi(J_1))\chi_x(\Psi(J_2))\\
&= \chi_x(\Psi(J_1)\Psi(J_2)) \\
\end{split}
\end{align}
which using Lemma \ref{allchi} shows it is a homomorphism.

A rewriting of Lemma \ref{otherway} yields   $(\chi_x \Psi) (J) = F_J(x)$.

If $\Psi(J_1) = \Psi(J_2)$ then $\chi_x(\overline{F_{J_1}})$ = $\chi_x(\overline{F_{J_2}})$ for all $x$.  But this means that $F_{J_1}(x)$ = $F_{J_2}(x)$ for all $x$.  Then using Definition \ref{fdef} we have $\pi(x) \cdot J_1$ = $\pi(x) \cdot J_2$ for all $x$.  But this means that $J_1$ = $J_2$ and we have shown $\Psi$ injectivity.  Additionally, an image of $J$ is order two, and thus a unit in $(\C \V)^*$ as stated.

For $\Psi(0) = 1$, we plug $J=0$ into Definition \ref{fdef} to find $F_0(x) = 1$ for all $x$.  This then shows, using Lemma \ref{otherway}, that $\chi_x(\overline{F_0}) = 1$ for all $x$.  But again using Lemma \ref{allchi} this means that $\overline{F_0} = 1$.
\end{proof}

\subsubsection{Partial Sums Over Weight Classes}

\begin{lemma}
\label{induct}
Let $\delta \in \V$ be weight one ($|\delta|=1$).  Let $\&$ be the bitwise AND on $\V$. Define for suitable $d$ 
  \begin{equation}
  \label{Scoef}
  \mathcal{A}_d := \{-1,(-1+d),\cdots,(1-d),1\}.
  \end{equation}
 Then for a given $d$ and $l$ and values $b_g$ the set inclusion
\begin{equation}
  \label{suma}
  C(x) := \sum_{|\alpha| \le l} b_{\alpha} (-1) ^ {\alpha \cdot x} \in \mathcal{A}_d
  \end{equation}
implies the following other sets inclusions
\begin{equation}
  \label{sumb}
  \begin{array}{ccc}
    D^-(x) & := & \left(\sum_{\begin{array}{c} \beta \text{ \& } \delta = 0,\\ |\beta| \le (l-1) \end{array}} b_{\beta + \delta} (-1) ^ {\beta \cdot x} \right) \in \mathcal{A}_{\frac{d}{2}} \\
     & \text{and} & \\
    D^+(x) & := & \left(\sum_{\begin{array}{c} \beta \text{ \& } \delta = 0,\\ |\beta| \le l \end{array}} b_{\beta
    } (-1) ^ {\beta \cdot x} \right) \in \mathcal{A}_{\frac{d}{2}} \\
  \end{array}
  \end{equation}
\end{lemma}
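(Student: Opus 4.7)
The plan is to exploit the fact that $|\delta|=1$ to perform a one-bit Fourier extraction. Every $\alpha \in \V$ decomposes uniquely either as $\alpha = \beta$ when $\alpha \,\&\, \delta = 0$ or as $\alpha = \beta + \delta$ when $\alpha \,\&\, \delta = \delta$, where in both cases $\beta \,\&\, \delta = 0$. Since $|\delta|=1$, the constraint $|\alpha| \le l$ translates to $|\beta| \le l$ in the first case and $|\beta| \le l-1$ in the second, which exactly matches the index sets appearing in $D^+$ and $D^-$. Combining this decomposition with $(-1)^{(\beta+\delta)\cdot x} = (-1)^{\delta\cdot x}(-1)^{\beta\cdot x}$ rewrites the hypothesis \eqref{suma} as
\begin{align}
C(x) = D^+(x) + (-1)^{\delta\cdot x}\, D^-(x).
\end{align}

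Next I would use that $\beta \,\&\, \delta = 0$ forces $\beta \cdot (x \oplus \delta) = \beta \cdot x$, so both $D^+(x)$ and $D^-(x)$ are invariant under flipping the $\delta$-bit of $x$, while the scalar factor $(-1)^{\delta\cdot x}$ changes sign. Evaluating the displayed identity at $x$ and at $x \oplus \delta$ and taking the half-sum and half-difference then isolates the two pieces:
\begin{align}
D^+(x) &= \tfrac{1}{2}\bigl(C(x) + C(x\oplus\delta)\bigr),\\
(-1)^{\delta\cdot x}\, D^-(x) &= \tfrac{1}{2}\bigl(C(x) - C(x\oplus\delta)\bigr).
\end{align}

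It then remains to verify the short arithmetic claim that $\tfrac{1}{2}(a\pm b) \in \mathcal{A}_{d/2}$ whenever $a,b \in \mathcal{A}_d$. Writing $a = -1 + id$ and $b = -1 + jd$ with $0 \le i,j \le 2/d$, one computes $(a+b)/2 = -1 + (i+j)(d/2)$ and $(a-b)/2 = -1 + (i-j+2/d)(d/2)$; the offsets $i+j$ and $i-j+2/d$ are nonnegative integers bounded by $4/d$, so both values lie in $\mathcal{A}_{d/2}$. Applied to the two identities above this immediately gives $D^+(x) \in \mathcal{A}_{d/2}$ and $(-1)^{\delta\cdot x}D^-(x) \in \mathcal{A}_{d/2}$; since $\mathcal{A}_{d/2}$ is symmetric about $0$, $D^-(x) \in \mathcal{A}_{d/2}$ as well.

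The argument is essentially a single Hadamard-style coordinate extraction, so there is no real conceptual obstacle; the one point requiring care — and presumably the content of the phrase ``suitable $d$'' in the statement — is that $d$ must divide $2$, so that $\mathcal{A}_d$ and $\mathcal{A}_{d/2}$ are both honest symmetric arithmetic progressions from $-1$ to $1$ and the halving operation does not leave the grid.
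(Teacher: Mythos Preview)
Your argument is correct and follows the same line as the paper's proof: split $C(x)=D^+(x)+(-1)^{\delta\cdot x}D^-(x)$, use the $\delta$-bit invariance of $D^\pm$, and recover each piece as a half-sum/half-difference of two values of $C$. Your write-up is in fact more careful than the paper's, which asserts the final inclusion without spelling out the arithmetic check on $\mathcal{A}_d$ or the meaning of ``suitable $d$''.
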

\begin{proof}
  Suppose we have the set inclusion (\ref{suma}) and taking $\alpha = \delta + \beta$ and also $\alpha = \beta$  we write
\begin{align}
C(x) = (-1)^{\delta \cdot x} D^-(x) + D^+(x)
\end{align}
  and also note $D^\pm(\delta + x) = D^\pm(x)$.  Then for $y$ with $y \text{ \& } \delta = 0$,  we have
\begin{align}
2D^\pm(y) = C(y) \pm C(\delta + y) \in \mathcal{A}_d
\end{align}
from which we find the claimed subset inclusions in equations (\ref{sumb}).
\end{proof}

\begin{lemma}
\label{powercoef}
\begin{align}
\overline{F_J} \in \frac{1}{ 2^{ {m - 1}   }}\Z G.
\end{align}
\end{lemma}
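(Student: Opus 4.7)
The plan is to prove the lemma by induction on the Fourier degree, working with a strengthened statement that tracks both the degree bound $\ell$ and a resolution parameter $k \ge 0$ controlling the value set $\mathcal{A}_{2^{1-k}}$. Specifically, I would prove: if $C : \V \to \R$ has Fourier expansion $C(x) = \sum_{|\alpha| \le \ell} b_\alpha (-1)^{\alpha \cdot x}$ and image $C(\V) \subseteq \mathcal{A}_{2^{1-k}}$ with $\ell + k \ge 1$, then every coefficient satisfies $b_\alpha \in \frac{1}{2^{\ell + k - 1}} \Z$. The lemma is then the case $(\ell, k) = (m, 0)$ applied to $C = F_J$: by Definition \ref{lift} and the standing assumption that $a^J_I = 0$ whenever $|I| > m$, $F_J$ has Fourier support inside $\{\alpha : |\alpha| \le m\}$, and $F_J$ takes values in $\{-1, +1\} = \mathcal{A}_2$.

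The base case $\ell = 0$ is immediate: $C$ reduces to the constant $b_0 \in \mathcal{A}_{2^{1-k}}$, whose elements lie in $\frac{1}{2^{k - 1}} \Z$ for any $k \ge 1$, matching $\frac{1}{2^{\ell + k - 1}} \Z$. For the inductive step $\ell \ge 1$, I would handle each coefficient by case. For $b_\alpha$ with $\alpha \ne 0$, choose any weight-one $\delta$ sitting inside the support of $\alpha$ and invoke Lemma \ref{induct}; the resulting $D^-$ has Fourier degree at most $\ell - 1$, takes values in $\mathcal{A}_{2^{-k}}$, and its Fourier coefficients are $\{b_{\beta + \delta}\}$ over $\beta$ with $\beta \text{ \& } \delta = 0$ and $|\beta| \le \ell - 1$, a set which contains $b_\alpha$ by the choice of $\delta$. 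The inductive hypothesis at $(\ell - 1, k + 1)$ then places $b_\alpha$ inside $\frac{1}{2^{(\ell - 1) + (k + 1) - 1}} \Z = \frac{1}{2^{\ell + k - 1}} \Z$. For the constant term $b_0$, the identity $b_0 = C(0) - \sum_{\alpha \ne 0} b_\alpha$ writes $b_0$ as a difference: the first summand lies in $\mathcal{A}_{2^{1-k}} \subseteq \frac{1}{2^{\ell + k - 1}} \Z$ (using $\ell \ge 1$), and the sum already lies in $\frac{1}{2^{\ell + k - 1}} \Z$ by the $\alpha \ne 0$ case.

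The main obstacle I anticipate is aligning the support restriction $\beta \text{ \& } \delta = 0$ built into Lemma \ref{induct} with the coefficient one wants to isolate: $D^-$ only exposes coefficients of the form $b_{\beta + \delta}$, which is why $\delta$ must be chosen as a bit of $\alpha$ rather than an arbitrary unit vector, and this mismatch is precisely what forces the separate algebraic treatment of the constant term. A secondary wrinkle is the degenerate corner $(\ell, k) = (0, 0)$ at which the strengthened statement fails, because $\mathcal{A}_2 = \{-1, +1\}$ is not contained in $2 \Z$; the hypothesis $\ell + k \ge 1$ excludes this corner, and since the recursion starting from $(m, 0)$ only visits pairs with $\ell + k = m$, the pathological case never arises provided $m \ge 1$.
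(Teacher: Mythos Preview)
Your proof is correct and follows essentially the same route as the paper: both arguments repeatedly invoke Lemma~\ref{induct} to peel off one bit at a time, halving the resolution parameter $d$ at each step, and both handle the constant coefficient by an evaluate-and-subtract argument using the already-established higher-weight coefficients. Your packaging as a strengthened two-parameter induction on $(\ell,k)$ is cleaner than the paper's combination of explicit chains $\delta_0 < \cdots < \delta_{|g|} = g$ with downward induction on $|g|$, but the underlying mechanism is identical.
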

\begin{proof}

Let $g$ index the coefficients of $\overline{F_J}$ as in the Definition \ref{lift} and let $l$ and $\delta$ be as in Lemma \ref{induct}. We will proceed by induction using Lemma \ref{induct} on $|g|$ and $l$.

We start with $l=m$ where we find a starting point $C(x) = F_J(x) \in \{-1,1\}$ for Lemma \ref{induct} with $d=2$.  Let $v \le w$ give a partial ordering on $\V$ when $(\forall i) v_i \le w_i$. Consider a sequence $\delta_0 = 0  < \cdots < \delta_m = g$ with $|\delta_i| = i$ and apply Lemma \ref{induct} repeatedly extracting $D^-(x)$ to find 
\begin{equation}
\label{power2coefeq}
\left(\sum_{\begin{array}{c} \beta \text{ \& } \delta_i= 0,\\ |\beta| \le (m-i) \end{array}} a_{\beta + \delta_i} (-1) ^ {\beta \cdot x} \right) \in \mathcal{A}_{\frac{1}{2^{i-1}}}
\end{equation}
The last equation in the sequence is simply $a_g \in \mathcal{A}_{\frac{1}{2^{m-1}}}$.

We next consider the case $|g| < m$.  We again consider a sequence $\delta_0 = 0 < \cdots < \delta_l = g$ this time stopping short at $l < m$ which matches $i = l$ in equation (\ref{power2coefeq}).  The term $a_g$ appears as a coefficient of $1 = (-1) ^ {0 \cdot x}$.  Any other $h > g$ has a term $a_h$, which are the coefficients of $(-1) ^ {(h-g) \cdot x}$.  The $a_h$ are all in $\frac{1}{ 2^{ {m - 1}   }}\Z$ by induction, and so the only other term $a_g$ will also be in $\frac{1}{ 2^{ {m - 1}   }}\Z$.
\end{proof}

\subsection{Bounding}

\begin{definition}
  For $\overline{F_J}  = \sum a^J_g g$ let 
\begin{align}
\supp(\overline{F_J}) := \{g \in G : a^J_g \not = 0\}.
\end{align}
  In other words, $\supp$ is the set of nonzero terms in $\overline{F_J}$.  Let 
\begin{align}
\nodes(\overline{F_J}) := \{i : a^J_{(g_1,\cdots,g_n)} \not = 0, g_i \not = 0\}.
\end{align}
  We call each bit $i$ a node.  In other words, $\nodes$ is the set of bits that are used in $\overline{F_J}$.  For a set $S \subseteq \V$ , we let $\nodes(S) := \cup_{J \in S} \nodes(\overline{F_J})$. 
\end{definition}

\subsubsection{Bound $A_m$}

\begin{lemma}
\label{boundparseval}
  $|\supp(\overline{F_J})|$ is bounded by a function, which we call $A_m$, of $m$.
\end{lemma}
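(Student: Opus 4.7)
The plan is to combine Parseval's identity on $F_J$ with the integrality result of Lemma \ref{powercoef} to obtain a bound that is entirely independent of $n$.

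First, I would note that $F_J(x) = (-1)^{\pi(x) \cdot J} \in \{-1,1\}$, so $|F_J(x)|^2 = 1$ for every $x \in \V$. By Lemma \ref{otherway}, $F_J$ expands in the character basis as $F_J(x) = \sum_{g \in \V} a^J_g (-1)^{g \cdot x}$. Parseval's identity, using the normalization convention of the paper (where $\widehat{f}(x) = \sum_g (-1)^{x \cdot g} f(g)$ is unnormalized and $a^J_I = \frac{1}{2^n}\widehat{F_J}(I)$), then gives
\begin{equation}
\sum_{g \in \V} |a^J_g|^2 \;=\; \frac{1}{2^n}\sum_{x \in \V} |F_J(x)|^2 \;=\; 1.
\end{equation}

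Second, by Lemma \ref{powercoef}, $\overline{F_J} \in \frac{1}{2^{m-1}}\Z\V$, so every nonzero $a^J_g$ is a nonzero integer multiple of $2^{-(m-1)}$ and therefore satisfies $|a^J_g| \ge 2^{-(m-1)}$. Combining the two facts, every $g \in \supp(\overline{F_J})$ contributes at least $4^{-(m-1)}$ to the Parseval sum above, so
\begin{equation}
|\supp(\overline{F_J})| \cdot 4^{-(m-1)} \;\le\; 1,
\end{equation}
and we may take $A_m := 4^{m-1}$.

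The bound is independent of $n$ because the Parseval mass on the right is fixed at $1$ and the coefficient granularity $2^{-(m-1)}$ from Lemma \ref{powercoef} depends only on $m$. The main thing to watch is the Fourier normalization: the $\frac{1}{2^n}$ factor baked into $a^J_I$ is exactly what cancels the $\sum_x |F_J(x)|^2 = 2^n$, leaving a clean, $n$-free inequality. Beyond that bookkeeping the argument is elementary, and I do not anticipate any real obstacle. Note that one could also bound $|\supp(\overline{F_J})|$ by the crude count $\sum_{k \le m} \binom{n}{k}$ using the support restriction from Conjecture \ref{conject2}, but this depends on $n$ and would not serve as an $A_m$ in the sense demanded by the lemma statement; the Parseval route is what gives a function of $m$ alone.
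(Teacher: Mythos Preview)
Your proof is correct and follows essentially the same approach as the paper: both combine Parseval's identity (yielding $\sum_g |a^J_g|^2 = 1$) with the lower bound $|a^J_g| \ge 2^{-(m-1)}$ from Lemma~\ref{powercoef} to obtain $A_m = 4^{m-1} = 2^{2m-2}$. Your additional remarks on the Fourier normalization and on why the crude $\sum_{k\le m}\binom{n}{k}$ bound would not suffice are helpful context but go beyond what the paper records.
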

\begin{proof}
  From Definition \ref{lift} we defined that $a^J_g = \frac{1}{2^n} \widehat{F_J}(g)$, so that Parsevals's theorem tells us that
\begin{align}
\sum_g \left(a^J_g\right)^2 = \left|\left|\overline{F_J}\right|\right|_2 =  1.
\end{align}
Lemma \ref{powercoef} bounds nonzero $|a^J_g|$ below by $\frac{1}{ 2^{ {m - 1}   }}$ or
\begin{align}
\left|a^J_g\right|^2 \ge \frac{1}{ 2^{2m - 2}}.
\end{align}
Summing up these two equations bounds $|\supp(\overline{F_J})|$ above
\begin{align}
|\supp(\overline{F_J})| \le 2^{2m - 2}.
\end{align}
and we can use $A_m = 2^{2m - 2}$.
\end{proof}

\subsubsection{Bound $B_m$}

\begin{lemma}
\label{bound_node}
Recall that $a^J_g$ is zero for all $g$ with $|g| > m$.  Then $|\nodes(\overline{F_J})|$ is bounded by a function, which we call $B_m$, of $m$.
\end{lemma}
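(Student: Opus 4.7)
}
The plan is to combine the support bound from Lemma \ref{boundparseval} with the weight restriction already assumed on the nonzero coefficients. By definition, $\nodes(\overline{F_J})$ is the union, over $g \in \supp(\overline{F_J})$, of the bit positions where $g$ is nonzero. So the size of $\nodes(\overline{F_J})$ is bounded by the total count of nonzero bits across all $g \in \supp(\overline{F_J})$, which in turn is at most $|\supp(\overline{F_J})|$ times the maximum Hamming weight of any such $g$.

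First I would unpack the definition: write
\begin{align}
\nodes(\overline{F_J}) = \bigcup_{g \in \supp(\overline{F_J})} \{i : g_i \ne 0\},
\end{align}
and apply the crude union bound
\begin{align}
|\nodes(\overline{F_J})| \le \sum_{g \in \supp(\overline{F_J})} |g|.
\end{align}
Next I would invoke the hypothesis $a^J_g = 0$ whenever $|g| > m$, which restricts the sum to $g$ with $|g| \le m$, giving $|\nodes(\overline{F_J})| \le m \cdot |\supp(\overline{F_J})|$. Finally, Lemma \ref{boundparseval} bounds $|\supp(\overline{F_J})| \le A_m = 2^{2m-2}$, so we may take
\begin{align}
B_m = m \cdot A_m = m \cdot 2^{2m - 2}.
\end{align}

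There is no real obstacle here; the lemma is essentially a book-keeping corollary of Lemma \ref{boundparseval} together with the $m$-locality assumption on $\overline{F_J}$. The only thing worth being careful about is that the definition of $\nodes$ counts each bit once, regardless of how many $g$'s activate it, so the union bound is not tight; a sharper $B_m$ could be obtained but is unnecessary for the downstream bounds $D_m$, $E_m$, $G_m$, which only require $B_m$ to be a finite function of $m$ independent of $n$.
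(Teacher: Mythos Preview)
Your proof is correct and follows exactly the paper's approach: bound $|\nodes(\overline{F_J})|$ by the number of support elements times the maximal weight $m$ of any supported $g$, yielding $B_m = m A_m = m\cdot 2^{2m-2}$. Your write-up is in fact more explicit than the paper's, which simply says the maximum number of nodes is given by choosing $m$ different nodes for each of the $|\supp(\overline{F_J})|$ terms.
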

\begin{proof}
  This is a direct consequence of Lemma \ref{boundparseval}.  The maximum number of nodes that we can have is given by choosing $m$ different nodes for each term $a^J_g$ and $|\supp(\overline{F_J})|$ of these $g$.  So we can use $B_m = m A_m$.
\end{proof}

\subsubsection{Bound $D_m$}

\begin{lemma}
\label{nonzeromult}
  Let $g \in \V$ be density one ($|g| = 1$) a.k.a. a node.  Let $a,b,c$ be in $\C \V$ each with no terms that use $g$.  Suppose $b$ is nonzero and suppose $a + bg$ and $c$ are both images of $\Psi$.  Then $g \in \nodes((a + bg) c)$.
\end{lemma}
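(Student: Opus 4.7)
The plan is to expand the product and show that the contribution to node $g$ comes entirely from one summand and is nonzero. Concretely, I would write
\begin{align}
(a+bg)c = ac + bgc
\end{align}
and analyze the two summands separately with respect to the bit of $g$.

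First I would observe that, since $a$ and $c$ each have no terms whose group element uses the bit $g$, their product $ac$ also has no such terms: in the group ring $\C \V$ multiplication of group elements is XOR, and XORing two vectors with the $g$-bit clear leaves the $g$-bit clear. By the same reasoning $bc$ has no terms using $g$, and therefore every nonzero term of $gbc = bgc$ has its $g$-bit flipped on and hence uses $g$. Consequently the supports of $ac$ and $bgc$ are disjoint when restricted to group elements containing $g$, so there can be no cancellation at any $g$-using position. It follows that $g \in \nodes((a+bg)c)$ if and only if $bgc \ne 0$, equivalently $bc \ne 0$.

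The key step is to use the hypothesis that $c$ is an image of $\Psi$. By Lemma \ref{psichi}, $\Psi$ lands in the unit group $(\C \V)^*$, so $c$ is invertible in $\C \V$, and in particular $c$ is not a zero divisor. Therefore $bc = 0$ would force $b = 0$, contradicting the assumption that $b$ is nonzero. Hence $bc \ne 0$, so $bgc \ne 0$, so $(a+bg)c$ contains at least one nonzero term whose group element uses bit $g$, which gives $g \in \nodes((a+bg)c)$.

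I would expect the only subtle point to be the invocation of $c$ being a unit rather than merely nonzero; this is what rules out the accidental vanishing $bc = 0$ that would occur generically in a group ring with zero divisors. The hypothesis that $a+bg$ is also in the image of $\Psi$ does not appear to be needed for this particular lemma, although it presumably plays a role in the broader inductive application that follows.
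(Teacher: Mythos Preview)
Your proof is correct and follows essentially the same decomposition as the paper: split $(a+bg)c$ into the $g$-bit-clear part $ac$ and the $g$-bit-set part $g(bc)$, then argue $bc \ne 0$. The only cosmetic difference is that you deduce $bc \ne 0$ from $c \in (\C\V)^*$ (citing Lemma~\ref{psichi}), whereas the paper argues it directly via characters, noting $\chi_x(c) = F_J(x) = \pm 1 \ne 0$ for all $x$ and then picking some $\chi$ with $\chi(b) \ne 0$; these two justifications are equivalent, and your observation that the hypothesis on $a+bg$ is unused in the proof itself is also accurate.
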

\begin{proof}
  The coefficient of $g$ in $(a + bg) c$ is just $bc$.  $c$ is an image of $\Psi$ and hence has $\chi(c) \not = 0$ for all characters $\chi$.  $b$ is nonzero so that there exists a $\chi$ with $\chi(b) \not = 0$ and hence $\chi(bc) \not = 0$.  Thus $bc \not = 0$n a $g$ is a node.
\end{proof}

\begin{figure}[H]
\centering
\includegraphics[width=0.9999\linewidth]{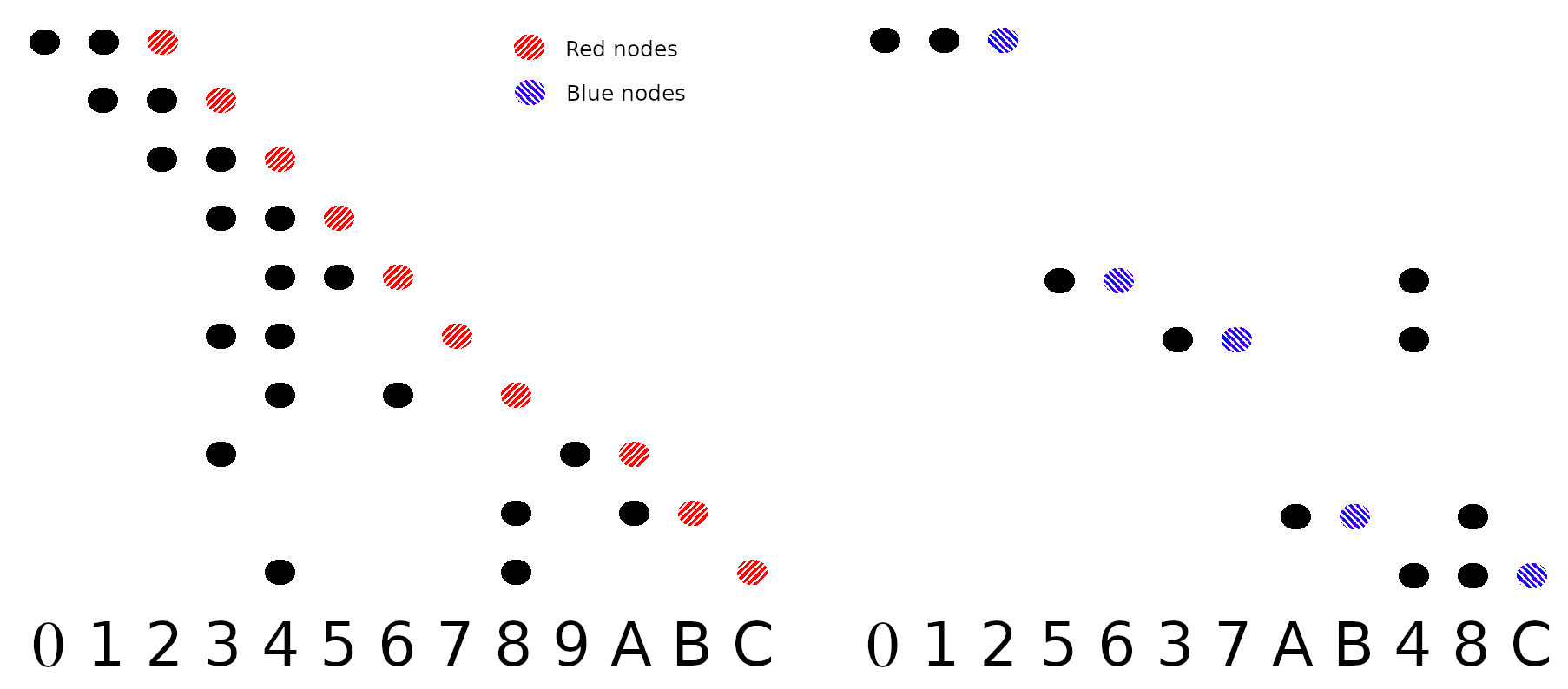}
\captionsetup{width=.9\linewidth}
\caption{Example construction of a $\Psi$-product.  The $i$-th rows of the left hand diagram each correspond to a $\Psi(J_i)$.  The nodes which are colored red are each using a new dimension, so no black nodes are above a red node.  Going from the (red) left half side to the (blue) right half side the columns are permuted so that some red nodes become blue when selected from rows $i_1,\cdots,i_l$.  The blue nodes have the property that no black nodes are above or below them, from which we can construct a product $\Psi(J_{i_1}) \cdots \Psi(J_{i_l})$ that uses all of the blue nodes.}
\label{nodefig} 
\end{figure}

\begin{lemma}
\label{bignodebound}
  Lemma \ref{bound_node} says that $|\nodes(\Psi(J))| \le B_m$ for any single $J$.  We will next consider a set $S \subseteq \V$ of $J$ which is closed under addition.  In other words a set where $S$ is a subgroup of $\V$. We will show that there is an $m$ dependent bound $D_m$ such that $|\nodes(S)| \le D_m$.
\end{lemma}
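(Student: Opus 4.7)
The plan is to argue by contradiction: assuming $|\nodes(S)|$ exceeds a threshold $D_m$ to be pinned down, I would construct a product of $\Psi$-images with more than $B_m$ nodes. Because $\Psi$ is a homomorphism (Lemma \ref{psichi}) and $S$ is a subgroup, that product is itself $\Psi(J^{*})$ for some $J^{*} \in S$, contradicting Lemma \ref{bound_node}.

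First, I would enumerate $J_1, J_2, \ldots, J_N \in S$ greedily so that each $\Psi(J_i)$ introduces at least one new node $r_i \notin \bigcup_{j<i} \nodes(\Psi(J_j))$, continuing until $\bigcup_i \nodes(\Psi(J_i)) = \nodes(S)$. These $r_i$ are the red nodes of Figure \ref{nodefig}. Since each $|\nodes(\Psi(J_i))| \le B_m$, the greedy process satisfies $N \ge |\nodes(S)|/B_m$. Next, I would extract a sub-sequence $i_1 < i_2 < \cdots < i_l$ such that $r_{i_k}$ is not a node of $\Psi(J_{i_{k'}})$ for any $k' \ne k$ (the blue configuration). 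By construction, $r_{i_k}$ cannot appear in rows earlier than $i_k$, so the only obstruction is a later row using it. Scanning top-down and, at each step, selecting the largest remaining index and deleting strictly smaller indices whose red nodes appear in the newly selected row's support, each selection eliminates at most $B_m$ rows, giving $l \ge N/B_m \ge |\nodes(S)|/B_m^2$.

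I would then apply Lemma \ref{nonzeromult} iteratively to $P := \Psi(J_{i_1}) \Psi(J_{i_2}) \cdots \Psi(J_{i_l})$. For each $k$, split $\Psi(J_{i_k}) = a_k + b_k r_{i_k}$ with $b_k \ne 0$ and with $a_k, b_k$ containing no terms that use the density-one element $r_{i_k}$; set $c_k := \prod_{k' \ne k} \Psi(J_{i_{k'}})$, which is a $\Psi$-image by the homomorphism property. None of the factors of $c_k$ uses $r_{i_k}$, and the set of group-ring elements supported away from $r_{i_k}$ is closed under multiplication in $\C \V$ (products of vectors with the $r_{i_k}$-bit zero again have that bit zero), so $c_k$ does not use $r_{i_k}$ either. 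Lemma \ref{nonzeromult} then places $r_{i_k}$ in $\nodes(P)$ for each $k$, so $|\nodes(P)| \ge l$. But $P = \Psi(J_{i_1} + \cdots + J_{i_l})$ and the sum lies in $S$, so Lemma \ref{bound_node} forces $l \le B_m$. Combining the two inequalities gives $|\nodes(S)| \le B_m^{3}$, and any such choice $D_m = B_m^{3}$ (possibly loosened to absorb off-by-one effects) suffices.

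The main obstacle is making the combinatorial extraction rigorous: verifying that the top-down greedy scheme really yields a subset in which every selected red node remains unused by the other selected rows, and bookkeeping the ``at most $B_m$ deletions per selection'' count carefully in the presence of both the ``red above'' and ``red below'' constraints of Figure \ref{nodefig}. Once that step is locked down, Lemma \ref{nonzeromult} applies mechanically at each position of the product, and the homomorphism property of $\Psi$ converts the resulting product back into a single element of $\Psi(S)$ to finish the contradiction with Lemma \ref{bound_node}.
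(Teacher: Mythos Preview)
Your proposal is correct and follows essentially the same approach as the paper's proof: both pick rows $J_i$ greedily so that each introduces a new (``red'') node, then extract a sub-selection (``blue'' rows) in which each selected red node is absent from all other selected rows, apply Lemma \ref{nonzeromult} factor-by-factor together with the homomorphism property of $\Psi$ to force at least $B_m+1$ nodes in a single $\Psi$-image, and obtain the same bound $D_m = B_m^3$. Your selection scheme (repeatedly take the largest remaining index, discard smaller rows whose red node lies in its support) is exactly the paper's bottom-up blue-node extraction described via Figure \ref{nodefig}; the only cosmetic difference is that you phrase it algorithmically while the paper phrases it pictorially.
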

\begin{proof}
  We pick an $S$ with $|\nodes(S)| > B_m^3$ ($D_m = B_m^3$) and will find a contradiction.  Consider Figure \ref{nodefig} as a reference to follow along.  (In the Figure we are using $B_m = 3$ for illustration.)

Pick $J_i \in S$ such that each $J_i$ is using a new node that $J_1, \cdots, J_{i-1}$ did not use.  We are assuming $|\nodes(S)| > B_m^3$ and every row uses at most $B_m$ new nodes, so we can find at least $1 + B_m^2$ rows in this way.  The rightmost node is colored red for each added row.  It is always the case that this new red node has no black nodes above it in the picture.  We do this for $1 + B_m^2$ rows corresponding to $J_1, \cdots, J_{1+B_m^2}$.

Next we start with the bottom right red node and work our way up and left creating blue nodes on the right from some of the red nodes on the left.  We permute columns so that each next red node column becomes the next column on the right turning the red node blue, filling in from right to left.  At each new blue node, the black nodes on the row that precede the blue one and these columns are permuted and copied in.

For example, in the Figure, the first blue node is ``C'' and the two black nodes to the left on the same row are ``8'' and ``4'' in that order from right to left.  The next blue node is on ``B'' preceded by black node ``A'' to the left on the same row.  Notice that we don't follow with blue ``8'' since we already handled that column, so we put the red ``8'' as a black node above the ``8'' column instead.  We skip the red node in column ``A'' since we already handled column ``A''.  Likewise for column ``8''.  The next blue node is from column ``7''.

We lose some potential red nodes (such as ``8'') when they have occurred in a column from a previous blue row.  In the worst case, for each blue node we sacrifice $B_m - 1$ red nodes in columns above the blue node row.  Since we have at least $1 + B_m^2$ red node rows total, we know that we will generate at least $1 + B_m$ blue nodes.

When we create a blue node, we know that there are no black nodes below the blue node since all of the previous rows have handled all of their columns.  Since they were also a red node, the blue nodes must have no black nodes above them either.  So there are no black nodes in any blue column.

Let $i_1,\cdots,i_l$ index the blue rows and consider the product
\begin{align}
\Psi(J_{i_1}) \cdots \Psi(J_{i_l}) = \Psi(J_{i_1} + \cdots + J_{i_l}) = \Psi(J)
\end{align}
where we are using the fact that $\Psi$ is a homomorphism and we are letting $J$ be the sum.  Each blue node is in $\nodes(\Psi(J))$,  by repeated applications of Lemma \ref{nonzeromult}, and we have at least $1+B_m$ blue nodes.  But this violates the Lemma \ref{bound_node} bound $B_m$ on the number of nodes for the individual $J$, and is a contradiction.
\end{proof}

\subsubsection{Bounds $E_m$ and $G_m$}

\begin{lemma}
\label{bound_g}
There is a bound on the number of group ring elements with locality $m$.
\end{lemma}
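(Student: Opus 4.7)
The plan is to count the images $\overline{F_J}$ of $\Psi$ that have all Fourier support on weight $\le m$ vectors, and show this count is at most $G_m = (2^m + 1)^{m^{3m} 64^{m^2 - m}}$. The idea is straightforward: count the number of possible positions and the number of possible coefficient values, then multiply.

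First I would fix the underlying set of nodes. By Lemma \ref{bignodebound}, if we consider $\overline{F_J}$ drawn from a subgroup $S \subseteq \V$, the collection uses at most $D_m = B_m^3 = (m A_m)^3 = m^3 \cdot 2^{6m-6}$ nodes total. Without loss of generality, after permuting coordinates, every $\overline{F_J}$ of interest is supported on group elements $g \in \V$ whose nonzero bits lie in a common set of at most $D_m$ coordinates.

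Next I would count positions. The number of $g \in \V$ with $|g| \le m$ supported on a fixed set of $D_m$ nodes is at most $D_m^m$, since one can injectively overcount by choosing an ordered tuple of $m$ node indices (allowing repeats to cover all $|g| \le m$ cases). A direct substitution then gives $D_m^m = (m^3 \cdot 2^{6m-6})^m = m^{3m} \cdot 2^{6m^2 - 6m} = m^{3m} \cdot 64^{m^2 - m}$, matching the tower of exponents in the definition of $G_m$.

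Then I would count coefficient values. By Lemma \ref{powercoef} every coefficient $a^J_g$ of $\overline{F_J}$ lies in $\frac{1}{2^{m-1}}\Z$, and since $||\overline{F_J}||_\infty \le 1$ (noted just after Definition \ref{lift}) each coefficient lies in $[-1,1]$. This gives exactly $2^m + 1$ admissible values per coefficient (including zero). Assigning one of these $2^m + 1$ values independently to each of the at most $D_m^m$ positions yields an upper bound of $(2^m + 1)^{D_m^m} = G_m$ on the number of distinct $\overline{F_J}$ of locality $\le m$ on the fixed node set.

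The only delicate point, and thus the main obstacle, is making sure the node-set reduction via Lemma \ref{bignodebound} is applied in the right context so that all relevant $\overline{F_J}$ share a common node universe of size $\le D_m$; this is precisely what allows the position count to collapse from something depending on $n$ to something depending only on $m$. Once that is in place, the counting argument is routine and produces exactly the constant $G_m$ of Equation \ref{Gm}, which is then what Lemma \ref{nonzeromult} and the injectivity of $\Psi$ (Lemma \ref{psichi}) will play off against to finish the refutation of Conjecture \ref{conject1}.
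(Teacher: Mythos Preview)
Your proposal is correct and follows essentially the same approach as the paper: bound the node universe by $D_m$ via Lemma \ref{bignodebound}, overcount the possible support positions by $E_m = D_m^m$, use Lemma \ref{powercoef} together with $||\overline{F_J}||_\infty \le 1$ to confine each coefficient to the $(2^m+1)$-element set $\mathcal{A}_{1/2^{m-1}}$, and multiply to obtain $G_m = (2^m+1)^{E_m}$. Your explicit flagging of the subgroup hypothesis needed to invoke Lemma \ref{bignodebound} and your unpacking of $D_m^m = m^{3m} 64^{m^2-m}$ are useful elaborations, but the argument is otherwise the paper's.
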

\begin{proof}
Let $D_m$ be the bound on the collective number of nodes from Lemma \ref{bignodebound}.  The number of group ring elements $y$ with locality $m$ on $D_m$ nodes can be counted.  There are $1 + \cdots + {D_m \choose m} \le E_m$ possible terms in $y$ where we pick an upper bound $E_m := D_m ^ m$.  Lemma \ref{powercoef} with equation (\ref{Scoef}) says that each term is in $\mathcal{A}_{\frac{1}{2^{m-1}}}$.  $\left|\mathcal{A}_{\frac{1}{2^{m-1}}}\right| = 2^m + 1$, so the number of group ring elements is bounded by $G_m := (2^m + 1) ^ {E_m}$.
\end{proof}

\subsection{Refutation}
\begin{theorem}
  The Conjecture \ref{conject1} is not true.
  \label{Refutation theorem}
\end{theorem}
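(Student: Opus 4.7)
The plan is to prove the theorem by contradiction, assuming Conjecture \ref{conject2} (equivalent to Conjecture \ref{conject1} by Lemma \ref{conjeq}) holds for some fixed $m$, and then playing the injectivity of $\Psi$ against the counting bound $G_m$ assembled in the preceding lemmas.

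First I would fix a candidate threshold $m$ supplied by the conjecture and choose $n$ large enough that $\V = \Z_2^n$ contains a linear subspace $S$ with $|S| = 2^{\lceil \log_2 G_m \rceil} > G_m$; any $n \ge \lceil \log_2 G_m \rceil$ will do, since $\V$ admits linear subspaces of every dimension. I would then apply Conjecture \ref{conject2} to this $S$ to extract a permutation $\pi$ with $(H \pi H)_{I,J} = 0$ for every $J \in S$ and every $I \in \V$ with $|I| > m$. By the definitions of $a^J_I$ and $\overline{F_J}$, this translates directly into the statement that, for each $J \in S$, the group ring element $\overline{F_J} = \Psi(J)$ has support contained in $\{g \in \V : |g| \le m\}$, i.e., $\Psi(J)$ is of locality at most $m$.

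Next I would invoke the combinatorial bounds. Because $S$ is a subgroup of $\V$ under XOR, Lemma \ref{bignodebound} yields $|\nodes(S)| \le D_m$, so every $\overline{F_J}$ lives inside the sub-group-ring supported on at most $D_m$ coordinates. Lemma \ref{bound_g} then bounds the number of distinct locality-$m$ group ring elements on those $D_m$ nodes by $G_m$, hence $|\Psi(S)| \le G_m$. But Lemma \ref{psichi} guarantees that $\Psi$ is an injective homomorphism, so $|\Psi(S)| = |S| > G_m$, a contradiction that refutes the conjecture.

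The step I expect to be most delicate is the translation in the middle: cleanly moving from the vanishing of the matrix entries $(H \pi H)_{I,J}$ to a locality statement about $\overline{F_J}$ as a group ring element, and ensuring that the subgroup hypothesis required by Lemma \ref{bignodebound} is visibly met. Choosing $S$ to be a linear subspace of $\V$ (rather than just a large subset) is exactly what makes that lemma applicable; once that is pinned down, the $G_m$ count together with the injectivity of $\Psi$ delivers the contradiction essentially automatically.
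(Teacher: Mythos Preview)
Your proposal is correct and follows essentially the same approach as the paper: choose a subgroup $S$ of $\V$ of size just exceeding $G_m$, apply the (assumed) Conjecture~\ref{conject2} to force every $\overline{F_J}$ with $J\in S$ to have locality at most $m$, and then pit the counting bound of Lemma~\ref{bound_g} against the injectivity of $\Psi$ from Lemma~\ref{psichi}. You spell out a bit more than the paper does---the translation from $(H\pi H)_{I,J}=0$ to the locality of $\overline{F_J}$, and the explicit invocation of Lemma~\ref{bignodebound} en route to Lemma~\ref{bound_g}---but these are exactly the ingredients the paper is using implicitly, so there is no substantive difference.
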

\begin{proof}
To disprove Conjecture \ref{conject1} using Lemma \ref{conjeq} we find a set $S_m \subseteq \V$ whose size is a function of $m$ and which can not be written at locality $m$:
\begin{align}
|\Psi(S_m)| > m.
\end{align}
For big enough $n$ let $S$ be a subgroup of $\V$ with two power size just exceeding $G_m$.  Then Lemma \ref{bound_g} shows that $\Psi(S)$ must contain repeats.  But $\Psi$ is injective, so this is a contradiction.
\end{proof}

\subsection{Large Constant}

Working backwards we can find the rather large constant $G_m$ used in the proof.  It is
\begin{align}
G_m
=
(2^m + 1) ^ {m^{(3m)} 64^{(m^2-m)}}
\end{align}
We are in fact matching $G_m$ with the size of a subgroup $S$.  Lets suppose that it is size $2^p$ then we have $G_m < |S| = 2^p$ and we may want to take a log by picking a basis of $S$.

\end{document}